\begin{document}
%

\newif\ifclientA
\newtheorem{lemma}{Lemma}
\newtheorem{corollary}{Corollary}
\newtheorem{theorem}{Theorem}
\newtheorem{proof}{Proof}
\newtheorem{assumption}{Assumption}
\newtheorem{remark}{Remark}
\newtheorem{definition}{Definition}	
\newtheorem{properties}{Property}

\clientAtrue 

\title{Path Planning with Moving Obstacles Using Stochastic Optimal Control}

\author{\IEEEauthorblockN{Seyyed Reza Jafari}
\IEEEauthorblockA{Division of Automatic Control\\Department of Electrical Engineering\\
Linköping University\\
Linköping, Sweden\\
Email: seyyed.reza.jafari@liu.se}
\and
\IEEEauthorblockN{Anders Hansson}
\IEEEauthorblockA{Division of Automatic Control\\Department of Electrical Engineering\\
Linköping University\\
Linköping, Sweden\\
Email: anders.g.hansson@liu.se}
\and
\IEEEauthorblockN{Bo Wahlberg}
\IEEEauthorblockA{Division of Decision and Control Systems\\
KTH Royal Institute of Technology\\
Stockholm, Sweden\\
Email: bo@kth.se}}


%


\maketitle

\begin{abstract}
Navigating a collision-free and optimal trajectory for a robot is a challenging task, particularly in environments with moving obstacles such as humans.
We formulate this problem as a stochastic optimal control problem.
Since solving the full problem is computationally demanding, we introduce a tractable approximation whose Bellman equation can be solved efficiently.
The resulting value function is then incorporated as a terminal penalty in an online rollout framework.
We construct a trade-off curve between safety and performance to identify an appropriate weighting between them, and compare the performance with other methods.
Simulation results show that the proposed rollout approach can be tuned to reach the target in nearly the same expected time as receding horizon $A^\star$ while maintaining a larger expected minimum distance to the moving obstacle. The results also show that the proposed method outperforms the considered CBF-based methods when a larger obstacle clearance is desired, while achieving comparable performance otherwise.

\end{abstract}


%
\IEEEpeerreviewmaketitle

\section{Introduction}
There are numerous applications of autonomous mobile robots that are discussed in the literature. In general, these applications are divided into two groups: indoor applications, such as delivering packages, cargo, and cleaning large buildings~\cite{c2}, and outdoor field robotics application~\cite{c15}. 
Finding an optimal path for the robot to reach its destination is a crucial task in these applications.
Path planning is one of the primary challenges that must be solved before mobile robots can autonomously navigate and explore complex environments~\cite{c14}.

Based on the nature of the environment that the robot is working in, the mobile path planning is divided into two classes, i.e., static environment and dynamic environment~\cite{c2},~\cite{c11}. In the static environment, the obstacles are static and do not move. In the dynamic environment, the obstacles move during the planning task, while the robot is moving to reach its destination. 
The primary goal of path planning is to ensure safe, efficient, and collision-free navigation in both static and dynamic environments~\cite{c9}.
Path planning algorithms can be broadly categorized into several groups. The first group is classical algorithms, such as Dijkstra's algorithm. The second group is heuristic-based approaches, e.g., the $A^\star$ algorithm. The third group is artificial intelligence-based approaches, such as reinforcement learning, fuzzy logic, and so on. The fourth group is hybrid-based approaches, which combine elements from all the above classes of path planning~\cite{c2}. However, this is not the only way to classify path planning algorithms. In~\cite{c11},~\cite{c10}, and~\cite{c12}, other classifications for different approaches to path planning are also presented. 

In this paper, we consider a case in which there is a robot and a stochastically moving obstacle in the environment. Also, we assume that there is a static target, and we want to control the robot to reach this target. 
We formulate the problem of finding an optimal path as an infinite horizon stochastic optimal control problem. The associated Bellman equation can be solved using value iteration. Notably, for many problems, achieving an exact solution is simply infeasible due to what Bellman properly termed the ‘curse of dimensionality’ in~\cite[p.~94]{c4}. This problem refers to the exponential increase in the required computations as the problem size grows~\cite[p.~282]{c5}.
In order to circumvent this problem, we exploit geometric symmetry and propose to use~\cite{c21} to reduce the dimension of the domain of the value function. We then solve the resulting reduced Bellman equation using fitted value iteration. 
Fitted value iteration is a parametric approximation method in which we parameterize the value function as a regression model such that it closely satisfies the Bellman equation~\cite[p.~211]{c3}. 

To obtain a policy in real time, we employ the rollout approach.
In this method, the control sequence over a finite horizon is optimized online while the cost-to-go of a base policy serves as a terminal penalty~\cite{c23,c27}, and~\cite[Chapter 4]{c28}.
At each time step, only the first control input of the optimized sequence is applied, and the procedure is repeated at the next step, resulting in a receding horizon implementation that improves upon the base policy. 

Path planning is inherently a multi-objective task~\cite{c25}. In particular, two common and often competing objectives are safety and performance~\cite{c24}.
In this paper, we introduce two cost terms: one associated with safety and the other with performance, and seek an appropriate trade-off between these objectives.
An important feature of the proposed formulation is that this trade-off can be tuned explicitly through a single parameter $\lambda$, which gives a simple mechanism for selecting a policy with a desired balance between performance and safety.

We compare our method with a stochastic discrete time CBF approach~\cite{c22} and a receding horizon $A^\star$ method. A key advantage of the proposed method is that the parameter $\lambda$ provides a direct and interpretable way to tune the trade-off between time to target and obstacle clearance. This makes it straightforward to generate a trade-off curve and to select a desired operating point. In contrast, for CBF-based methods, performance depends strongly on the choice of barrier function and associated tuning parameters, and obtaining a desired trade-off is less direct. Our numerical results show that the proposed method can achieve nearly the same expected time to target as receding horizon $A^\star$ while maintaining a larger expected minimum distance to the obstacle. They also show that the proposed method outperforms the considered CBF-based methods when a larger obstacle clearance is desired, while providing comparable performance otherwise.

Summarizing, the main contributions of this paper are:
\begin{enumerate}
    \item We formulate an optimal path planning problem as a stochastic optimal control problem and develop an efficient approximate solution suitable for online rollout.
  
    \item We use geometric symmetry to reduce the domain of the value function, enabling its representation with a reduced number of variables, which significantly lowers the computational cost of solving the Bellman equation.

    \item We show through simulations that the proposed method provides an easily tunable trade-off between time to target and obstacle clearance, while comparing favorably with receding horizon $A^\star$ and the considered CBF-based methods.

\end{enumerate}

\section{Mathematical Model} \label{Sec_II}
Consider a problem where there is a robot, a stochastically moving obstacle, and a static target in $2$-dimensional space. The goal for the robot is to reach the target as fast as possible without colliding with the obstacle. 
In this section, we define the mathematical model of the dynamics and formulate an optimal control problem that will meet the objectives.

\subsection{System Dynamics}
Let $r_k=\begin{bmatrix}r_k^x & r_k^y\end{bmatrix}^T\in \mathbb{R}^2$ and $h_k=\begin{bmatrix}h_k^x & h_k^y\end{bmatrix}^T \in \mathbb{R}^2$ represent the positions of the robot and the moving obstacle at time $k$, respectively. Moreover, assume that $t=\begin{bmatrix}t^x & t^y\end{bmatrix}^T \in \mathbb{R}^2$ is the target position. 
The dynamics of the robot and the moving obstacle are described by the following equations:
\begin{equation}
    \begin{cases}
        r_{k+1}=r_k+u_k \triangleq F_r\left(r_k,u_k\right) \\
        h_{k+1}=h_k+w_k \triangleq F_h\left(h_k,w_k\right)
    \end{cases}
    \label{eq1}
\end{equation}
where $F_r:\mathbb{R}^2 \times \mathcal{U} \rightarrow \mathbb{R}^2$, $F_h:\mathbb{R}^2 \times \mathcal{W} \rightarrow \mathbb{R}^2$. The sets $\mathcal{U}$ and $\mathcal{W}$ represent all possible movements that the the robot, and the moving obstacle can take, respectively.
Here $\mathcal{U}=\left\{R_u\begin{bmatrix}\cos\left(\alpha\right) & \sin\left(\alpha\right)\end{bmatrix}^T \in \mathbb{R}^2: R_u \in \{0,1\}, \alpha \in \mathcal{A}_1\right\}$, and $\mathcal{W}=\left\{R_w\begin{bmatrix}\cos\left(\gamma\right) & \sin\left(\gamma\right)\end{bmatrix}^T \in \mathbb{R}^2: R_w \in \{0,1\}, \gamma \in \mathcal{A}_2\right\}$, where $\mathcal{A}_i = \{\frac{q\pi}{n_i} \mid q=0,1,\cdots,2n_i-1\}$, and $n_i,i=1,2$ are natural numbers. We can rewrite the set $\mathcal{W}$ as $\{w^0,w^1,\cdots,w^{2n_2}\}$, in which $w^{i}=\begin{bmatrix}\cos\left(\frac{i\pi}{n_2}\right)& \sin\left(\frac{i\pi}{n_2}\right)\end{bmatrix}^T$, for $0 \leq i \leq 2n_2-1$, and $w^{i}=\begin{bmatrix}0 & 0\end{bmatrix}^T$ for $i=2n_2$. We let $w$ take different values within the set $\mathcal{W}$ with different probabilities.
The models defined in (\ref{eq1}) are known as \textit{constant position} models \cite{c19}.

\begin{assumption} \label{assum1}
    We assume that the robot and the moving obstacle are solid bodies with radius of $\frac{R}{2}$. This assumption simplifies the problem, but we can also handle the general case where the robot and obstacle have different sizes. Moreover, when the distance between the robot and the moving obstacle is less than or equal to $R$, i.e., $d \triangleq ||h-r||_2 \leq R$, the robot has collided with the moving obstacle. Additionally, when the distance between the robot and the target is less than or equal to $R$, i.e., $e \triangleq ||r-t||_2 \leq R$, the robot has reached the target.
\end{assumption}

\begin{assumption} \label{assum2}
    We assume that $w$ is radially symmetric, i.e., its probability density function depends only on the magnitude $R_w$ and is constant for all orientations $\gamma$.
\end{assumption}

\subsection{Stochastic Infinite Horizon Optimal Control Problem}
Assume that $r_0$ and $h_0$ are initial positions of the robot and the moving obstacle, respectively. 
The problem is to find an optimal policy that enables the robot to reach the final state, i.e., the target, while trying to avoid colliding with the moving obstacle. The objectives are twofold: (i) minimize the expected time to reach the target, and (ii) maximize the minimum distance maintained between the robot and the moving obstacle. Obtaining the optimal solution to this problem is non-trivial. We formulate it as a stochastic optimal control problem with a specifically designed incremental cost. This incremental cost is expressed as a linear combination of two terms: the first term encourages the robot to reach the target as quickly as possible, while the second term penalizes situations in which the robot comes too close to the moving obstacle.
The mathematical description of this problem is:
\begin{equation}
    \begin{aligned}
        &\operatorname{minimize} \lim_{N \to \infty} \mathbb{E} \left[\sum_{k=0}^{N-1} f(h_k,r_k) \right]  \\
        &\text{subject to} \quad  
        \begin{cases}
            r_{k+1}= F_r\left(r_k,u_k\right) \\
            h_{k+1}=F_h\left(h_k,w_k\right)
       \end{cases}, \quad k \geq 0 \\
        &\text{\phantom{subject to}} \quad u_k \in \mathcal{U}, w_k \in \mathcal{W} \quad k \geq 0 
    \end{aligned}
    \label{eq2}
\end{equation}
 where $\mathbb{E}[\cdot]$ denotes  mathematical expectation, and where $f:\mathbb{R}^2 \times \mathbb{R}^2 \rightarrow \mathbb{R}_{+}$ is the so-called incremental cost. Here, $\mathbb{R}_+$ is the set of nonnegative real numbers. We define the function $f$ for a given target position $t$ as
\begin{equation}
        f(h,r) = \begin{cases}
        0, \quad ||r-t||_2 \leq R\\
        \lambda (||r-t||_2-R)^2 + \frac{1-\lambda}{||h-r||_2+\epsilon}, \quad ||r-t||_2 > R
        \end{cases}  \\
    \label{eq3}
\end{equation}
where $\lambda \in [0,1]$ is a tuning parameter, and $\epsilon \in \mathbb{R}_{++}$ is a small number. Here, $\mathbb{R}_{++}$ is the set of positive real numbers. 
The motivation behind selecting this function is that when the robot reaches the target, i.e., $||r-t||_2 \leq R$, the incurred cost is zero. This makes the target position $t$ a so-called cost-free state. In other words, once the robot reaches the target, it will stay there without incurring any cost. This cost free state is well-known in the literature of Stochastic Shortest Path (SSP) problem~\cite{c16,c28} and is also referred to as the destination~\cite{c1}. Moreover, when the robot gets close to the moving obstacle or is far from the target, the incurred cost is high. 

\subsection{Solution of Optimal Control}
In this section, we will solve the infinite horizon optimal control problem (\ref{eq2}) using the Bellman equation:
\begin{equation}
    V(h,r) = \min_{u \in \mathcal{U}} \mathbb{E}\left[f(h,r)+V\left(F_h(h,w),F_r(r,u)\right)\right]
    \label{eq4}
\end{equation}
If we find a solution $V$ to this equation, the optimal policy is the minimizing argument in the right-hand side of the equation. Thus the optimal policy is a function of $(h,r)$.
Note that, the Bellman equation in (\ref{eq4}) must be solved for all $(h,r)$ and for a given target position $t$.
This problem leads to the 'curse of dimensionality' due to the large state space.

\section{Geometric Symmetry}
In this section, we utilize the symmetry of the value function to reduce its domain and solve the Bellman equation in a lower dimensional state space. 
\begin{definition} \label{def1}
    We define $e \triangleq ||r-t||_2$ as the distance between the robot and the target, $d \triangleq ||h-r||_2$ as the distance between the robot and the moving obstacle, $\theta \triangleq \arccos\left(\frac{(r-t)^T(h-r)}{||h-r||_2 \cdot||r-t||_2}\right)$ as the angle between these two vectors $r-t$ and $h-r$.
\end{definition}

By defining these new variables, we can rewrite the incremental cost in~\eqref{eq3} as
\begin{equation}
    f(h,r) =\begin{cases}
        0, \quad e \leq R\\
         \lambda (e-R)^2 + (1-\lambda)\frac{1}{d+\epsilon}, \quad e > R
    \end{cases}
    \triangleq \Bar{f}(d,e)
    \label{eq4b}
\end{equation}
where $\Bar{f} : \mathbb{R}_+ \times \mathbb{R}_+ \rightarrow \mathbb{R}_+$.
\begin{definition}
    Consider the two pairs, $(h_1,r_1)$ and $(h_2,r_2)$, in Fig.~\ref{Fig1}. If $d_1=d_2$, $e_1=e_2$, and $\theta_1 =\pm \theta_2$, then these pairs are said to be symmetric around $t$ .
\end{definition}

\begin{figure}[h]
      \centering
      \begin{tikzpicture}[scale=0.45]

    
    \draw[->] (-4,-3.5) -- (7.5+0.5,-3.5) node[right] {$x$};
    \draw[->] (-3.5,-4) -- (-3.5,7.5+0.5) node[above] {$y$};
    \coordinate (t) at (2,2);
    \coordinate (r_1) at (3,{2+sqrt(8)});
    \coordinate (r_2) at ({2+sqrt(8)},3);
    \coordinate (h_1) at (3.2,7);
    \coordinate (h_2) at ({7},{2.95});
    
    \fill[black] (t) circle (3pt) node[above left] {$\mathbf{t}$};

    \fill[black] (r_1) circle (3pt) node[above left] {$\mathbf{r_1}$};

    \fill[black] (r_2) circle (3pt) node[above left] {$\mathbf{r_2}$};

    \fill[black] (h_1) circle (3pt) node[above left] {$\mathbf{h_1}$};

    \fill[black] (h_2) circle (3pt) node[right] {$\mathbf{h_2}$};
    
    \draw[->, black] (t) -- (r_1) node[midway, above left] {$\mathbf{e_1}$};
    \path (t) -- (r_1) coordinate[pos=0.5] (midpoint);
    \draw[dashed,-] (2,2) -- (5,2) node[right] {};

    \draw[->, black] (t) -- (r_2) node[midway, above] {$\mathbf{e_2}$};
    \path (t) -- (r_2) coordinate[pos=0.5] (midpoint);

    \draw[->, black] (r_1) -- (h_1) node[midway, above left] {$\mathbf{d_1}$};
    \path (r_1) -- (h_1) coordinate[pos=0.5] (midpoint);
    \draw[dashed,black] (r_1) -- (4,{2+2*sqrt(8)}) node[right] {};

    \draw[->, black] (r_2) -- (h_2) node[midway, below] {$\mathbf{d_2}$};
    \path (r_2) -- (h_2) coordinate[pos=0.5] (midpoint);
    \draw[dashed,black] (r_2) -- (7,{2+5/sqrt(8)}) node[right] {};
   
    \foreach \x in {1,2,3,4,5,6,7,8,9} \draw[draw=none] (\x,0) -- (\x,-0.1);
    \foreach \y in {1,2,3,4,5,6,7,8,9} \draw[draw=none] (0,\y) -- (-0.1,\y);

    \draw[->] (r_2) ++(18:1) arc (35:-7:0.5) node[midway, right] {\textcolor{black}{$\theta_2$}};

    \draw[->] (r_1) ++(70:1) arc (52:75:0.7) node[midway, right] {\textcolor{black}{$\theta_1$}};

    \draw[dashed, black] (t) circle (3);
    \draw[dashed, black] (t) circle (5.12);
    \draw[dashed, black] (r_1) circle (2.2);
    \draw[dashed, black] (r_2) circle (2.2);
\end{tikzpicture}
      \caption{ Geometrical representation of position of robot, and moving obstacle for two symmetric pairs $(h_1,r_1)$, and $(h_2,r_2)$}
      \label{Fig1}
\end{figure}

The paper~\cite{c21} introduces a framework that exploits symmetry to achieve dimensionality reduction in dynamic programming formulations of finite-horizon stochastic optimal control problems. In these settings, the inherent symmetry is shown to be preserved by both the optimal cost-to-go function and the corresponding optimal policy.
It can be shown, using the result of~\cite{c21}, that the value function inherits this symmetry for the case of the stochastic shortest path problem under suitable assumptions. In other words, $V(h_1,r_1)=V(h_2,r_2)$ for all symmetric pairs $(h_1,r_1)$ and $(h_2,r_2)$. 
\ifclientA 
See Appendix A for the full proof.
\else 
See Appendix for the full proof.
\fi

This reduces the domain of the value function from $4$ variables $\begin{bmatrix}h & r\end{bmatrix}^T \in \mathbb{R}^4$ to $3$ scalar variables $d,e$, and $\theta$. 
\ifclientA 
See Appendix B for the full derivation of the reduced state formulation.
\else 
See~\cite[Appendix B]{c33} for the full derivation of the reduced state formulation.
\fi

Using this symmetry property, we evaluate the Bellman equation only on a cross-section of symmetric states defined by
\begin{equation}
    \mathcal{C}=\left\{x=\begin{bmatrix}h & r\end{bmatrix}^T \in \mathbb{R}^4: r^y=t^y,r^x>t^x\right\}
    \label{eq5a}
\end{equation}
Hence, the Bellman equation in reduced coordinates becomes:
\begin{equation}
    \begin{aligned}
         & W(d,e,\theta)= \min_{u \in \mathcal{U}} \mathbb{E}\left[\Bar{f}(d,e)+W\left(\Bar{F}(d,e,\theta,u,w)\right)\right]
     \end{aligned}
     \label{eq5}
\end{equation}
where $\Bar{F}:\mathbb{R}_+ \times \mathbb{R}_+ \times [0,\pi] \times \mathcal{U} \times \mathcal{W} \rightarrow \mathbb{R}_+ \times \mathbb{R}_+ \times [0,\pi]$ is the system dynamics on $\mathcal{C}$. 
To define the function $\Bar{F}$, let $(h,r)$ be a pair defined by $(d,e,\theta)$ on $\mathcal{C}$ as in Fig.~\ref{Fig2}, i.e.,
\begin{equation}
    \begin{cases}
         h = r + d \begin{bmatrix}\cos(\theta) & \sin(\theta)\end{bmatrix}^T \\
         r = t + e \begin{bmatrix}1 & 0\end{bmatrix}^T
    \end{cases}
    \label{eq7}
\end{equation}
Then we can express the next values of $\left(d,e,\theta\right)$ as functions of the current values of $\left(d,e,\theta\right)$ and $u$, and $w$. More precisely, we have
\begin{equation}
    \begin{cases}
        e_{+} = ||e\begin{bmatrix}1 & 0\end{bmatrix}^T +u||_2 \triangleq F_e(e,u) \\
        \theta_{+} = \arccos\left(\frac{\nu^T \xi}{||\nu||_2 ||\xi||_2}\right) \triangleq F_\theta(d,e,\theta,u,w) \\
        d_{+} = ||d \begin{bmatrix}\cos(\theta) & \sin(\theta)\end{bmatrix}^T +w-u||_2 \triangleq F_d(d,\theta,u,w)
    \end{cases}
    \label{eq8}
\end{equation}
where $F_e: \mathbb{R}_+ \times \mathcal{U} \rightarrow \mathbb{R}_+$, $F_\theta: \mathbb{R}_+ \times \mathbb{R}_+ \times [0,\pi] \times \mathcal{U} \times \mathcal{W} \rightarrow [0,\pi]$, and $F_d: \mathbb{R}_+ \times [0,\pi] \times \mathcal{U} \times \mathcal{W} \rightarrow \mathbb{R}_+$, and $\Bar{F}= \begin{bmatrix}F_d&F_e&F_\theta\end{bmatrix}^T$, and where
\begin{equation}
    \nu = \begin{bmatrix}e + u^x \\ u^y\end{bmatrix}, \quad
    \xi = \begin{bmatrix}d \cos(\theta) + w^x -u^x \\ d \sin(\theta) + w^y - u^y\end{bmatrix}
    \label{eqq11}
\end{equation}
\ifclientA 
See Appendix C for the full derivation of the reduced-state dynamics.
\else 
See~\cite[Appendix C]{c33} for the full derivation of the reduced-state dynamics.
\fi
\begin{figure}[!b]
      \centering
      \begin{tikzpicture}[xscale=0.7, yscale=0.7]
     \clip (0.5,1) rectangle (7.5,6.5);
    \draw[->] (0.9,1.2) -- (7,1.2) node[right] {$x$};
   \draw[->] (1.2,0.9) -- (1.2,5.5) node[above] {$y$};

    \coordinate (t_1) at (2,2);
    \coordinate (r_1) at (4.5,2);
    \coordinate (r_2) at ({2+sqrt(8)},3);
    \coordinate (h_1) at ({6},{2.5});
    \coordinate (h_2) at ({6.7},{5});
    
    \fill[black] (t_1) circle (3pt) node[above left] {$\mathbf{t}$};
    
    \fill[black] (r_1) circle (3pt) node[above left] {$\mathbf{r}$};
    
    \fill[black] (h_1) circle (3pt) node[above right] {$\mathbf{h}$};
    
    \fill[black] (r_2) circle (3pt) node[above left] {$\mathbf{r}_+$};

    \fill[black] (h_2) circle (3pt) node[above left] {$\mathbf{h}_+$};

    \draw[->, black] (t_1) -- (r_2) node[midway, above] {$\mathbf{e}_+$};
    \draw[dashed,-] (2,2) -- (6,2) node[right] {};
    \draw[dashed,-] (r_2) -- (7,3) node[right] {};
    \draw[->, black] (t_1) -- (r_1) node[midway, below] {$\mathbf{e}$};
    \draw[->, black] (r_2) -- (h_2) node[midway, above left] {$\mathbf{d}_+$};
    \draw[->, black] (r_1) -- (h_1) node[midway, above] {$\mathbf{d}$};
    \path (r_2) -- (h_2) coordinate[pos=0.5] (midpoint);
    \draw[dashed,black] (r_2) -- (7,{2+5/sqrt(8)}) node[right] {};
    \draw[->] (4.5,2) ++(0:1) arc (0:25:0.8) node[midway, below right] {\textcolor{black}{$\theta$}};
    \draw[->] ({2.1+sqrt(8)},{3+0.2/sqrt(8)}) ++(15:1.2) arc (30:55:1.5) node[right] {\textcolor{black}{$\theta_+$}};

    \foreach \x in {1,2,3,4,5,6,7,8,9} \draw[draw=none] (\x,0) -- (\x,-0.1);
    \foreach \y in {1,2,3,4,5,6,7} \draw[draw=none] (0,\y) -- (-0.1,\y);

\end{tikzpicture}
      \caption{ Geometrical representation of position of robot, obstacle, and target with future values}
      \label{Fig2}
\end{figure}
We can use value iterations as in~\cite{c3} to solve the Bellman equation in~(\ref{eq5}):
\begin{equation}
    W_{k+1}(d,e,\theta)= \min_{u \in \mathcal{U}} \mathbb{E}\left[\Bar{f}(d,e)+W_k\left(\Bar{F}(d,e,\theta,u,w)\right)\right]
    \label{eq6}
\end{equation}
In the next section, we use fitted value iteration for this recursion. 

\section{fitted Value Iteration} \label{Sec_IV}
In fitted value iteration, we approximate the value function with a parameterized function. We select a set of sample states and adjust the parameters so that the Bellman equation is closely satisfied at those samples. The parameters are then updated iteratively, and this process repeats until they converge.

We introduce a vector-valued function $\varphi: \mathbb{R}_+ \times \mathbb{R}_+ \times [0,\pi] \rightarrow \mathbb{R}^p$, known as a feature vector, and parameters $a_k \in \mathbb{R}^p$. Our goal is to approximate the function $W_k(d,e,\theta)$, with $\Tilde{W}: \mathbb{R}_+ \times \mathbb{R}_+ \times [0,\pi] \times \mathbb{R}^p \rightarrow \mathbb{R}$. One possibility is to use a linear regression model as follows:
\begin{equation}
    \Tilde{W}(d,e,\theta,a_k)=a_k^T\varphi(d,e,\theta)
    \label{eq13}
\end{equation}
We consider a piecewise constant approximation of $W_k$ by assuming that the $i$-th component of $\varphi$, $ i \in \{1,2,\cdots,p\}$ is defined as
\begin{equation}
     \varphi_i(d,e,\theta)=\begin{cases}
        1,\quad (d,e,\theta) \in \mathcal{S}_i \\
        0, \quad (d,e,\theta) \notin \mathcal{S}_i
    \end{cases}
    \label{eq14}
\end{equation}
where $\mathcal{S}_i$ is a partition of $\mathbb{R}_+ \times \mathbb{R}_+ \times \left[0,\pi \right]$. More precisely, write $\mathcal{D} = \cup_{j=1}^{N_d-1} \mathcal{D}_j = [d_1,d_{N_d}]$, where $\mathcal{D}_j=[d_j,d_{j+1}]$, such that $\mathcal{D}_j \cap \mathcal{D}_{\Bar{j}}= \emptyset$, for $j \neq \Bar{j}$.
Similarly, we define $\mathcal{E}= \cup_{l=1}^{N_e-1} \mathcal{E}_l$, and $\Theta = \cup_{m=1}^{N_\theta-1} \Theta_m$. 
Then we let $\mathcal{S}_i=\mathcal{D}_j \times \mathcal{E}_l \times \Theta_m$, where $j \in \left\{1,2,\cdots,N_d-1\right\}$, $l \in \left\{1,2,\cdots,N_e-1\right\}$, $m \in \left\{1,2,\cdots,N_\theta-1\right\}$, and $i=(j-1)(N_e-1)(N_\theta-1)+(l-1)(N_\theta-1)+m$.
This results in $p=(N_d-1)(N_e-1)(N_\theta-1)$ partitions.
We refer the reader to~\cite{c3} for more details about fitted value iteration.

We consider $N_r$ different samples $(d^s,e^s,\theta^s) \in \mathbb{R}_+ \times \mathbb{R}_+ \times [0,\pi]$ with at least one sample in each set of the partition. Then, at each iteration, we perform two updates:
\begin{enumerate}
    \item \textbf{Value Update}:
    \begin{equation}
        \beta_k^s=\min_u \mathbb{E} \left[\Bar{f}(d^s,e^s)+\Tilde{W}\left(d_+^s,e_+^s,\theta_+^s,a_k\right)\right] 
        \label{eq15}
    \end{equation}
    \item \textbf{Parameter Update}:
    \begin{equation}
        a_{k+1} = \arg\min_a \sum_{s=1}^{N_r}\left(a^T\varphi(d^s,e^s,\theta^s)-\beta_k^s\right)^2
        \label{eq16}
    \end{equation}
\end{enumerate}
where $d_+^s$, $e_+^s$, and $\theta_+^s$ are defined as in~(\ref{eq8}).
We iterate until $\delta_k \triangleq ||a_{k+1}-a_k||_\infty$ becomes sufficiently small. 
Algorithm~\ref{alg:EFVI} summarizes the fitted value iteration. 
\begin{algorithm}[H]
\caption{Reduced-Space Fitted Value Iteration}
\label{alg:EFVI}
\begin{algorithmic}[1]
\State \textbf{Input:} Incremental Cost $\Bar{f}$, Functions $F_d$, $F_e$ and $F_\theta$, Integer $n_2$, Control Set $\mathcal{U}$, Set $\mathcal{W}$, Tolerance $\epsilon_{tol}>0$, Maximum Number of Iterations $K$, $N_r$ Different Sample Points $(d^s,e^s,\theta^s)$
\State \textbf{Output:} $a$
\State \textbf{Initialize:} $k=0$, $a_0 = 0$, Generate $N_r$ Sample Points
\While{$||a_{k+1}-a_{k}||_{\infty}>\epsilon_{\text{tol}}$ \& $k<K$}

\State \textbf{Value Update}:
    \For{$s \leftarrow 0$ \textbf{to} $N_r$}
                \State $Q(u) \leftarrow 0$
                \For{$u \in \mathcal{U}$}
                    \For{$i \leftarrow 0$ \textbf{to} $2n_2$}
                        \State Set: $w=w^{i}$
                        \State Compute: $d_+,e_+,\theta_+$
                        \State Compute:
                        \State $\Tilde{W}(d_+^s,e_+^s,\theta_+^s,a_k)=a_k^T\varphi(d_+^s,e_+^s,\theta_+^s)$
                        \State Update:
                        \State $Q(u) \leftarrow Q(u) + \frac{1}{2n_2+1}\Tilde{W}(d_{+}^s,e_{+}^s,\theta_{+}^s,a_k)$
                    \EndFor
                    \State $Q(u) \leftarrow Q(u) + \Bar{f}(d^s,e^s)$
                \EndFor  
                \State $\beta_k^s=\min_{u \in \Bar{\mathcal{U}}} Q(u)$
            \EndFor
            \State \textbf{Parameter Update}:
            \State $a_{k+1} = \arg\min_{a} \sum_{s=1}^{N_r}\left(a^T\varphi(d^s,e^s,\theta^s)-\beta_k^s\right)^2$
            \State $k \leftarrow k+1$
            
\EndWhile
\end{algorithmic}
\end{algorithm}

\section{Effect of Probability Distribution and Constraints}

In the proposed method, a crucial assumption that must be satisfied is Assumption~\ref{assum2}.
In this section, we relax this condition and also consider constraints on the states. Notice that under this assumption, we can use symmetry reduction and reduce the dimension of the value function. However, this assumption imposes a restriction on the probability distribution of the motion of the obstacle. In realistic scenarios, such a restriction is not practical, and therefore the assumption must be relaxed to make the proposed approach applicable to real problems. 

We use the optimal value function $V$ that satisfies the Bellman equation in~(\ref{eq4}) and formulate the following $N$-step lookahead rollout problem~\cite{c23}:
\begin{equation}
    \begin{aligned}
        & \underset{u_l \in \mathcal{U}}{\operatorname{minimize}} \quad \mathbb{E} \left[V(h_{k+N},r_{k+
        N}) + \sum_{l=k}^{N+k-1}f(h_k,r_k) \right]  \\
        &\text{subject to} \quad  
        \begin{cases}
            r_{l+1}= F_r\left(r_l,u_l\right) \\
            h_{l+1}=F_h\left(h_l,w_l\right)
       \end{cases}, \;  l=k,k+1,\cdots, N+k-1   \\ 
        &\text{\phantom{subject to}} \quad u_l \in \mathcal{U}, \; w_l \in \mathcal{W},\; x_l=(r_l,h_l) \in \mathcal{X} \quad l \geq 0 
    \end{aligned}
    \label{eq17}
\end{equation}
where $\mathcal{X}$ is the set of all constraints on $x_k$. The expectation is taken with respect to $w$, and the probability distribution of $w$ does not necessarily satisfy Assumption~\ref{assum2}.
Note that for given $h$, $r$, and a static target position $t$, $V(h,r)=W(d,e,\theta)$, where $d,e$, and $\theta$ are defined in Definition~\ref{def1}.

At each time step, we solve the optimization problem in~\eqref{eq17} in real time to obtain the optimal control sequence $u_l^\star$. The first control input of this sequence is then applied, and the remaining elements are discarded.
Had there been no constraints on $x_k$ and the probability distribution had satisfied our previous assumptions, then this online procedure would provide the optimal feedback for~\eqref{eq2}. 
However, when state constraints are included or Assumption~\ref{assum2} is violated, the approach provides a suboptimal solution.
This approach reduces the impact of the curse of dimensionality at the price of a suboptimal solution.

Solving the optimization problem~\eqref{eq17} is computationally intensive. To reduce this computational burden, we apply the \textit{certainty equivalence (CE)} principle, in which the stochastic disturbances $w_k$ are replaced by their expected values~\cite{c23}. Using this approximation, we formulate the following problem
\begin{equation}
    \begin{aligned}
        & \underset{u_l \in \mathcal{U}}{\operatorname{minimize}} \quad \left[V(h_{k+N},r_{k+
        N}) + \sum_{l=k}^{N+k-1}f(h_k,r_k) \right]  \\
        &\text{subject to} \quad  
        \begin{cases}
            r_{l+1}= F_r\left(r_l,u_l\right) \\
            h_{l+1}=F_h\left(h_l,\Bar{w}_l\right)
       \end{cases}, \; l=k,k+1,\cdots, N+k-1   \\ 
        &\text{\phantom{subject to}} \quad u_l \in \mathcal{U}, \; w_l \in \mathcal{W},\; x_l=(r_l,h_l) \in \mathcal{X} \quad l \geq 0 
    \end{aligned}
    \label{eq18}
\end{equation}
where $\Bar{w}_l=\mathbb{E}\left[w_l\right]$.

Next, let $V_{N,\lambda}^{\star}(h_0,r_0,t)$ and $V_{N\text{(CE)},\lambda}^{\star}(h_0,r_0,t)$ denote the optimal values of the optimization problems in~\eqref{eq17} and~\eqref{eq18}, respectively, for a fixed target position $t$, an initial condition $\begin{bmatrix} h_0 & r_0 \end{bmatrix}^\mathsf{T}$, and a given tuning parameter $\lambda$. Recall that the stage cost is a combination of two competing costs, weighted by the parameter $\lambda$ as in~\eqref{eq3}. To evaluate the effect of $\lambda$, we compute the expected performance over a representative set of initial states and targets. For each value of $\lambda$, we define
\begin{equation}
    \begin{aligned}
        & \Bar{V}_{N,\lambda} = \mathbb{E}\left[V^*_{N,\lambda}(h_0,r_0,t)\right] \\
        & \Bar{V}_{N\text{(CE)},\lambda} = \mathbb{E}\left[V_{N\text{(CE)},\lambda}^*(h_0,r_0,t)\right]
    \end{aligned}
    \label{eq19}
\end{equation}
where the expectation is taken over a grid of sampled initial conditions and target positions. The purpose of this evaluation is to study the trade-off between the two performance criteria that form the cost in~\eqref{eq3}, namely the expected time to
reach the target and the expected minimum distance maintained from the moving obstacle. By varying $\lambda$, we obtain the trade-off curve between the two performance criteria.

\section{Numerical Results}

In this section, we consider a numerical example and we use the proposed method described in Algorithm~\ref{alg:EFVI} to find the value function for different values of $\lambda$. 

We let $\!N_d=115,\;N_e=85,\;N_\theta=26\!$, and $\!\{d_1,d_2,\cdots,d_{61}\}=\{0,0.05,0.1,\cdots,3\}\!$, $\!\{d_{62},d_{63},\cdots,d_{115}\}=\{3.5,4,4.5,\cdots,30\}\!$. Also we consider $\!\{e_1,e_2,\cdots,e_{31}\}=\{0,0.1,0.2,\cdots,3\}\!$, $\!\{e_{32},e_{33},\cdots,e_{85}\}=\{3.5,4,4.5,\cdots,30\}\!$. Moreover we let $\!\{\theta_1,\theta_2,\cdots,\theta_{26}\}=\{0,\frac{\pi}{25},\frac{2\pi}{25},\cdots,\pi\}\!$. This results in $p= 239,400$ partitions.
We consider $N_r=718,200$ sample points, i.e., three sample points per partition. Additionally, we let $R=1$, $\epsilon=10^{-8}$ , $\epsilon_{\text{tol}}=10^{-5}$, and $K=20$. The sets $\mathcal{U}$ and $\mathcal{W}$ are defined in Section~\ref{Sec_II}, and we let $n_1=n_2=16$. Moreover, we define the set $\mathcal{X}$ as follows
 \begin{displaymath}
     \mathcal{X} =\left\{\begin{bmatrix}h^x &h^y &r^x&r^y\end{bmatrix}^T \in \mathbb{R}^4: 0 \leq r^x,r^y,h^x,h^y \leq 20 \right\}
 \end{displaymath}
We implement Algorithm~\ref{alg:EFVI} in MATLAB and run it over a set of weighting parameters $\lambda \in [0,1]$.
We define a discrete probability distribution on $\mathcal{W}$ by assigning the unnormalized weights
\begin{displaymath}
    \Tilde{p}^j=\begin{cases}100, \quad \text{if } w^{i_x}>0, \text{ and } w^{i_y}>0 \\
    1, \quad \text{otherwise}\end{cases}
\end{displaymath}
and normalizing them as $ P(w^i)=\frac{\Tilde{p}^i}{\sum_{j=1}^{2n_2+1}\Tilde{p}^j}$.

For the CBF approach~\cite{c22}, at each time instance, we solve the following expectation-based optimization problem
\begin{displaymath}
    \begin{aligned}
        & u_{\text{CBF}} = \arg\min_{u \in \mathcal{U}} ||u-u_{\text{nom}}||_2^2 \\
        &\text{subject to} \quad \mathbb{E}\left[B\left(F_h(h_k,w_k),F_r(r_k,u)\right)\mid x_k\right]\geq \alpha B(x_k)
    \end{aligned}
\end{displaymath}
where $B:\mathbb{R}^4 \rightarrow \mathbb{R}$ is the barrier function defined as $B(x_k)=||h_k-r_k||-d_0$, and $\alpha \in(0,1)$ is a design parameter.
We also consider the CE variant
\begin{displaymath}
    \begin{aligned}
        & u_{\text{CBF~(CE)}} = \arg\min_{u \in \mathcal{U}} ||u-u_{\text{nom}}||_2^2 \\
        &\text{subject to} \quad B\left(F_h(h_k,\Bar{w}_k),F_r(r_k,u)\right)\geq \alpha B(x_k)
    \end{aligned}
\end{displaymath}
where $\Bar{w}_k=\mathbb{E}[w_k]$. 
The nominal controller $u_{\text{nom}}$ is defined as the controller that drives the robot toward the goal as fast as possible without considering the obstacle.

We perform $50$ independent simulations, where in each trial the target position is sampled uniformly at random from the grid. The initial positions of the robot and the moving obstacle are also sampled uniformly at random, subject to the constraints $||r_0-t||_2>1$ and $||h_0-r_0||_2>1$. For each set of initial conditions, we generate $10$ independent realizations of the stochastic disturbance $w$ to approximate the costs in~\eqref{eq17} and~\eqref{eq18}, i.e., $V_{N,\lambda}^{\star}(h_0,r_0,t)$ and $V_{N\text{(CE)},\lambda}^{\star}(h_0,r_0,t)$. For the CBF method, we evaluate the performance for multiple values of $\alpha \in (0,1)$, and $d_0$. 
Fig.~\ref{Fig3} shows the trade-off curve obtained by varying the weighting parameter $\lambda$ compared with the receding horizon $A^\star$, CBF, and CBF (CE) methods. 
Recall that we maximize the expected minimum distance between the robot and the obstacle.
To obtain a standard Pareto trade-off plot, where both axes represent minimization, we display its negative value in Fig.~\ref{Fig3}. Thus, points closer to the origin correspond to better performance in both objectives.
From Fig.~\ref{Fig3}, the proposed method provides a favorable trade-off between expected time to target and expected minimum obstacle distance over the grid of initial conditions. In particular, for operating points with larger desired obstacle clearance, the proposed method achieves better trade-offs than the considered CBF and CBF~(CE) methods. For smaller desired clearance, the performance is comparable. Moreover, the figure shows that the proposed method can be tuned to operate close to the $A^\star$ solution in expected time to target while maintaining a larger expected minimum distance to the obstacle.
It is possible that a different barrier function and tuning could improve the CBF trade-off. However, identifying such a barrier function is not straightforward. In contrast, the proposed method offers a simple and direct tuning mechanism through $\lambda$, which makes it easy to explore and select the desired time-safety trade-off.

Moreover, Fig.~\ref{Fig3} shows that increasing the horizon shifts the trade-off curve further down and to the left, meaning that the longer horizon yields a better trade-off between the objectives (lower values for both costs) compared to curves that lie above and to the right. Furthermore, from this figure, the receding horizon $A^\star$ approach yields a solution that prioritizes minimizing the expected time to reach the target, as its points lie toward the lower-right part of the plot.
\begin{figure}[t]
      \centering
      \includegraphics[scale=0.6]{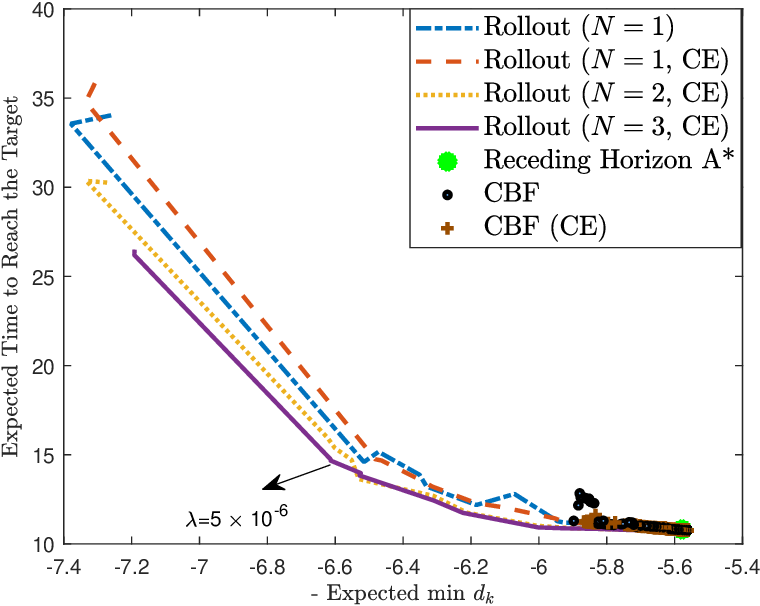}
      \caption{Trade-off curve between expected time to target (horizontal axis) and negative minimum distance to the obstacle (vertical axis) for different horizons, compared with receding horizon $A^\star$, CBF, and CBF~(CE) over the grid}
      \label{Fig3}
\end{figure}

Now, we let $h_0=\begin{bmatrix}2 & 6\end{bmatrix}^T$, $r_0=\begin{bmatrix}4& 12\end{bmatrix}^T$, and $t=\begin{bmatrix}4 & 3\end{bmatrix}^T$.
We consider $100$ different realizations of $w$, and compare the proposed method with the receding-horizon $A^\star$, CBF, and CBF~(CE) approaches. Fig.~\ref{Fig4} illustrates the trade-off curve obtained by varying the weighting parameter $\lambda$ compared with the other methods for the given initial condition and for different combinations of $\alpha \in (0,1)$, and $d_0$. 
From Fig.~\ref{Fig4}, the best trade-off between the competing costs occurs at $\lambda^*=5 \times 10^{-6}$, which is the same value that we obtained from Fig.~\ref{Fig3}.
Note that Fig.~\ref{Fig3} shows the trade-off curve computed over the entire grid, whereas Fig.~\ref{Fig4} illustrates the trade-off curve corresponding to the given initial conditions.
From Fig.~\ref{Fig4}, for the given initial condition, the proposed method outperforms CBF and CBF~(CE) in the region where a larger minimum obstacle distance is desired, while achieving comparable performance in the rest of the trade-off range. Thus, the proposed method is particularly advantageous when safety, in the sense of larger obstacle clearance, is prioritized.

\begin{figure}[h]
      \centering
      \includegraphics[scale=0.6]{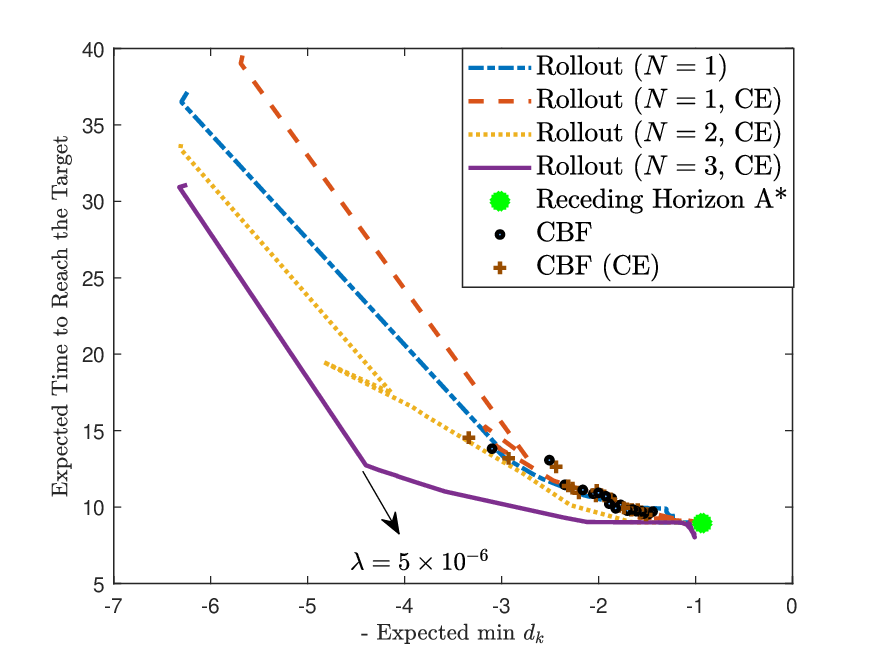}
      \caption{Trade-off curve between expected time to target (horizontal axis) and negative minimum distance to the obstacle (vertical axis) for different horizons, compared with receding-horizon $A^\star$, CBF, and CBF (CE) for the given initial conditions}
      \label{Fig4}
\end{figure}
From Fig.~\ref{Fig4}, we select $\lambda = \lambda^\ast = 5 \times 10^{-6}$ for our method, and $\alpha = 0.75$ and $d_0 = 1$ for the CBF and CBF~(CE) methods, as these values provide the best trade-off for the given initial condition. We then plot the trajectory and control signal for a sample run of the different methods.
 Fig.~\ref{Fig5} shows the trajectory of the robot and the moving obstacle for the different methods. From this figure, the trajectory of the robot when using the proposed approach makes a detour to avoid the moving obstacle. The trajectories produced by the CBF and CBF~(CE) methods are identical in this example, and they also take a detour in order to satisfy the barrier function constraint. In contrast, the receding-horizon $A^\star$ approach does not account for the moving obstacle and moves directly toward the goal without considering it.
Fig.~\ref{Fig6} shows the distance between the robot and the target, and the distance between the robot and the moving obstacle, over time for the different methods. 
This figure shows that receding-horizon $A^\star$ reaches the target slightly faster than the proposed method. However, the proposed method maintains a larger distance from the moving obstacle during the motion. Hence, the proposed method can achieve time to target performance close to $A^\star$ while offering improved safety. Relative to CBF and CBF (CE), the proposed method achieves comparable performance in this example, while offering the practical advantage that its time-safety trade-off can be tuned directly through  $\lambda$.

\begin{figure}[!b]
      \centering
      \includegraphics[scale=0.6]{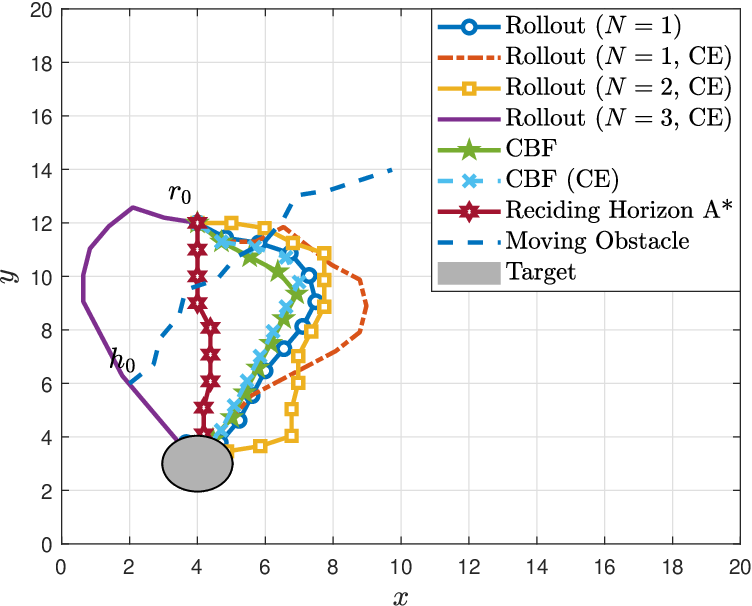}
      \caption{Trajectory of Robot, and Moving Obstacle using Different Methods}
      \label{Fig5}
\end{figure}

\section{Conclusions}

In this paper, we proposed a symmetry reduced stochastic optimal control approach for path planning in the presence of a stochastically moving obstacle. Compared with receding horizon $A^\star$, CBF, and CBF~(CE), the proposed method provides a simple and interpretable tuning mechanism through $\lambda$ to select and visualize the trade-off between time to target and obstacle clearance. The numerical results show that the proposed method can achieve nearly the same expected time to target as receding horizon $A^\star$ while maintaining a larger expected minimum distance to the obstacle. They also show that the proposed method outperforms the considered CBF based methods when a larger obstacle clearance is desired, while offering comparable performance otherwise.
Future work includes extending this framework to three-dimensional environments or to scenarios with moving targets. In the three-dimensional case with a static target, the symmetry reduction decreases the state dimension from $6$ to $3$. Additional directions include applying the reduced dimension value function within a reinforcement learning framework to learn the optimal value function in unknown environments, and generalizing the approach to multiple moving obstacles.

\begin{figure}[!t]
     \centering
      \includegraphics[scale=0.6]{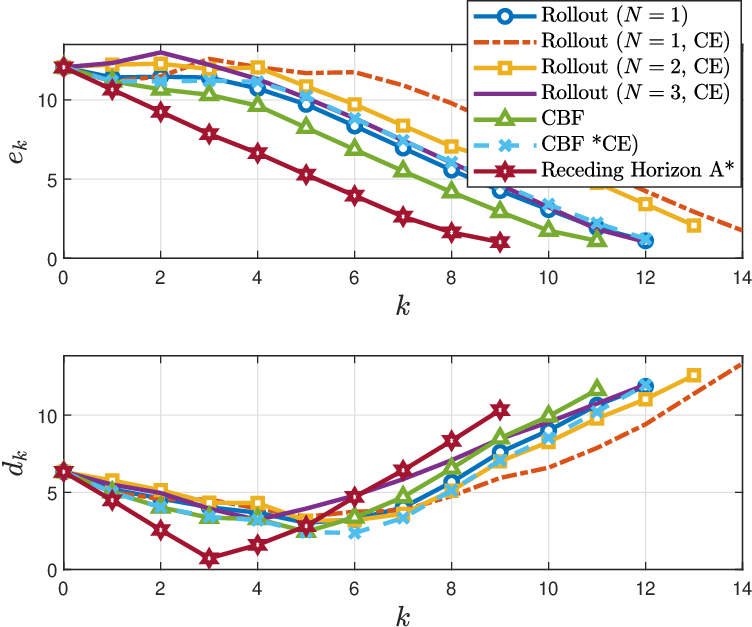}
      \caption{Distance between Robot and Target, and Distance between Robot and Moving obstacle}
      \label{Fig6}
\end{figure}

\section*{APPENDIX}

\ifclientA 
In this appendix, we extend the symmetry–reduction framework of~\cite{c21}, originally developed for finite–horizon stochastic optimal control, to the SSP setting. We then show how this framework can be applied to reduce the dimensionality of the state space in the proposed path planning problem. 

\subsection{Symmetry of Optimal Value Function in Stochastic Shortest Path Problems} \label{App_A}

Let $G$ be a \emph{group}, with group operation $*$.
Consider a bijective function $\phi:G \times \mathcal{X} \rightarrow \mathcal{X}$, which is a so-called \emph{group action}, i.e.,
\begin{subequations}
    \begin{align}
        &\phi\left(\alpha*\beta,x\right) = \phi\left(\alpha,\phi(\beta,x)\right), \quad \forall \alpha,\beta \in G, \forall x \in \mathcal{X} \label{eq_A0a} \\
         &\phi\left(e,x\right)=x \label{eq_A0b}
    \end{align}
    \label{eq_A0}
\end{subequations}
where $e$ is the identity element of $G$.
We consider an SSP problem~\cite{c1} and show that, under suitable invariance assumptions on the system dynamics, cost, and noise distribution, the optimal value function $V^\star$ is \emph{invariant}; i.e., for any $\alpha \in G$
\begin{equation}
    V^\star\left(\phi\left(\alpha,x\right)\right) = V^\star\left(x\right),
   \qquad \forall x\in \mathcal{X}
   \label{eq_A1}
\end{equation}

To prove invariance of the optimal value function, we make the following assumptions:

\begin{enumerate}
    \item \textbf{System dynamics and cost}:
    The system evolves according to
   \begin{equation}
       x_+ = F(x,u,w)
       \label{eq_A2}
   \end{equation}
   with an incremental cost $f(x,u,w) \geq 0$. Where $F: \mathcal{X} \times \mathcal{U} \times \mathcal{W} \rightarrow \mathcal{X}$, and $f:\mathcal{X} \times \mathcal{U} \times \mathcal{W} \rightarrow \mathbb{R}_+$.
    Also we assume that there is a cost-free absorbing terminal state $t \in \mathcal{X}$ satisfying
   \begin{equation}
       F(t,u,w) = t, \quad f(t,u,w)=0, \quad  \forall u \in \mathcal{U}, w \in \mathcal{W}
       \label{eq_A3}
   \end{equation}

   \item \textbf{Bellman Operator}: The Bellman operator $T$ is defined by
   \begin{equation}
       (TV)(x) = 
       \begin{cases}
           0, & x = t, \\
           \displaystyle \min_{u\in \mathcal{U}} \, \mathbb{E}\big[\, f(x,u,w) + V(F(x,u,w)) \,\big], & x \neq t.
          \end{cases}
       \label{eq_A8}
    \end{equation}
   We assume that there exists at least one proper policy from the initial states—i.e., a policy that reaches $t$ with probability one and finite expected cost. Under this assumption, the Bellman operator admits a unique minimal nonnegative fixed point,~\cite[Proposition~1]{c1}:
   \begin{equation}
       V^\star = TV^\star,
       \qquad
        V^\star \le \tilde{V} \quad \text{for any other fixed point } \tilde{V}
       \label{eq_A3b}
   \end{equation}
   Thus, $V^\star$ is the optimal value function of the SSP problem.

   \item \textbf{Group invariance}:
   In addition to the bijective group action $\phi$, we assume that there exist bijective group actions $\chi:G \times\mathcal{U}\to \mathcal{U}$ and $\psi:G \times \mathcal{W}\to \mathcal{W}$ acting on $u$ and $w$, respectively; such that
   \begin{equation}
       \begin{aligned}
      F(\phi\left(\alpha,x\right), \chi\left(\alpha,u\right), \psi\left(\alpha,w\right)) &= \phi\!\left(\alpha,F(x,u,w)\right)  \\
      f(\phi(\alpha,x), \chi(\alpha,u), \psi(\alpha,w)) &= f(x,u,w)
       \end{aligned}
       \label{eq_A4}
   \end{equation}
   \item \textbf{Noise distribution invariance}:
   We assume that $w$ takes values in a finite set $\mathcal W$ with probabilities $P(w)$, such that:
   \begin{equation}
       P(\psi(\alpha,w)) = P(w)
       \label{eq_A5}
   \end{equation}


   \item \textbf{Terminal Invariance}:
    The terminal state is fixed under the group action $\phi$
    \begin{equation}
        \phi(\alpha,t) = t
        \label{eq_A5b}
    \end{equation}
\end{enumerate}


\begin{theorem} \label{The1}
Under the above assumptions, the optimal value function $V^\star$ is invariant under the group action and satisfies~\eqref{eq_A1}.

\end{theorem}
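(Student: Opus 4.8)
The plan is to show that the Bellman operator $T$ in~\eqref{eq_A8} is \emph{equivariant} under the group action and then to invoke the extremal characterization of its fixed point in~\eqref{eq_A3b}. For a function $V:\mathcal{X}\to\mathbb{R}_+\cup\{+\infty\}$ and $\alpha\in G$, write $V^\alpha(x)\triangleq V(\phi(\alpha,x))$. The central step is to prove $T(V^\alpha)=(TV)^\alpha$ for every such $V$ and every $\alpha$. For $x$ with $\phi(\alpha,x)=t$ — equivalently $x=t$, since $\phi(\alpha,\cdot)$ is a bijection fixing $t$ by~\eqref{eq_A5b} — both sides equal $0$. For $x\neq t$ I would expand $T(V^\alpha)(x)=\min_{u}\mathbb{E}_w\big[f(x,u,w)+V(\phi(\alpha,F(x,u,w)))\big]$ and substitute the dynamics/cost invariance~\eqref{eq_A4}, which rewrites the integrand as $f(\phi(\alpha,x),\chi(\alpha,u),\psi(\alpha,w))+V\big(F(\phi(\alpha,x),\chi(\alpha,u),\psi(\alpha,w))\big)$. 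Because $\chi(\alpha,\cdot)$ is a bijection of $\mathcal{U}$, the minimization over $u$ may be re-indexed as a minimization over $u'=\chi(\alpha,u)$; because $\psi(\alpha,\cdot)$ is a bijection of the finite set $\mathcal{W}$ with $P(\psi(\alpha,w))=P(w)$ by~\eqref{eq_A5}, the expectation over $w$ equals the expectation over $w'=\psi(\alpha,w)$. Collecting these changes of variable, the right-hand side becomes $\min_{u'}\mathbb{E}_{w'}\big[f(\phi(\alpha,x),u',w')+V(F(\phi(\alpha,x),u',w'))\big]=(TV)(\phi(\alpha,x))=(TV)^\alpha(x)$, which is the claimed equivariance.

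The second step combines equivariance with minimality. Since $V^\star\ge 0$ and $TV^\star=V^\star$, equivariance gives $T(V^{\star\alpha})=(TV^\star)^\alpha=V^{\star\alpha}$, so $V^{\star\alpha}$ is again a nonnegative fixed point of $T$ (nonnegativity, and finiteness wherever $V^\star$ is finite, are inherited directly from $V^\star$). Minimality of $V^\star$ in~\eqref{eq_A3b} then yields $V^\star\le V^{\star\alpha}$ for every $\alpha\in G$. Applying this with $\alpha^{-1}$ in place of $\alpha$ gives $V^\star\le V^{\star\alpha^{-1}}$, and evaluating at $\phi(\alpha,x)$ while using the group-action identities~\eqref{eq_A0} — so that $\phi(\alpha^{-1},\phi(\alpha,x))=\phi(\alpha^{-1}*\alpha,x)=\phi(e,x)=x$ — gives $V^{\star\alpha}\le V^\star$. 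The two inequalities force $V^{\star\alpha}=V^\star$, i.e.~\eqref{eq_A1}.

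I expect the main obstacle to be the bookkeeping in the equivariance step: justifying carefully that the simultaneous change of variables $u\mapsto\chi(\alpha,u)$ inside the $\min$ and $w\mapsto\psi(\alpha,w)$ inside the $\mathbb{E}$ is legitimate — this is precisely where bijectivity of $\chi(\alpha,\cdot)$ and $\psi(\alpha,\cdot)$ and the distributional invariance~\eqref{eq_A5} are used — and handling the piecewise definition of $T$ at the terminal state. It is worth noting en route that the standing SSP hypothesis (existence of a proper policy, hence a well-defined minimal nonnegative fixed point) is all that is needed from SSP theory; no separate argument that properness is preserved by $\phi$ is required, since the whole proof runs through the fixed-point characterization of $V^\star$ rather than through policies directly.
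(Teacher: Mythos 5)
Your proposal is correct and follows essentially the same route as the paper: prove equivariance of the Bellman operator via the change of variables $u\mapsto\chi(\alpha,u)$, $w\mapsto\psi(\alpha,w)$ (using bijectivity, the invariance of $F$ and $f$, and the distributional invariance of $w$), then conclude $V^{\star}\le V^{\star}\circ\phi(\alpha,\cdot)$ from minimality of the fixed point, and obtain the reverse inequality by applying the argument to $\alpha^{-1}$ together with the group-action identities. The only cosmetic difference is that you state the equivariance as $T(V^{\alpha})=(TV)^{\alpha}$ and traverse the change of variables in the opposite direction from the paper, which is immaterial.
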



\begin{proof}
First, we show that
\begin{equation}
     (TV) \left(\phi(\alpha,x)\right) \;=\; \left(T\left(V\left(\phi\left(\alpha,x\right)\right)\right)\right)(x)
   \qquad \forall\; V:\mathcal{X} \to \mathbb{R}_+
   \label{eq_A7}
\end{equation}
The proof proceeds in two cases:
\begin{enumerate}
    \item \textbf{Case 1}, $x=t$:

    From terminal invariance~\eqref{eq_A5b}, both sides of~\eqref{eq_A7} equal zero.
     \item \textbf{Case 2}, $x \neq t$:

     We begin by evaluating the left-hand side
     \begin{equation}
         \begin{aligned}
             & (TV)\big(\phi(\alpha,x)\big) \\
           & = \min_{u \in \mathcal{U}} \, \mathbb{E}_w\big[\, f(\phi(\alpha,x),u,w) + V(F(\phi(\alpha,x),u,w)) \,\big]
         \end{aligned}
         \label{eq_A12}
     \end{equation}
     Using the change of variables $u = \chi(\alpha,\tilde u)$, and $w=\psi(\alpha,\tilde{w})$,  and the invariance properties~\eqref{eq_A4}–\eqref{eq_A5}, we obtain
    \begin{equation}
    \small
    \begin{aligned}
        &(TV)\big(\phi(\alpha,x)\big) \\
         & = \min_{u \in \mathcal{U}} \, \sum_w P(w)\big[\, f(\phi(\alpha,x),u,w) + V(F(\phi(\alpha,x),u,w)) \,\big] \\
         & =  \min_{\tilde{u} \in \mathcal{U}} \, \sum_{\tilde{w}} P(\psi(\alpha,\tilde{w})) \times\\
         &\big[\, f(\phi(\alpha,x),\chi(\alpha,\tilde{u}),\psi(\alpha,\tilde{w})) + V(F(\phi(\alpha,x),\chi(\alpha,\tilde{u}),\psi(\alpha,\tilde{w}))) \,\big]\\
         & = \min_{\tilde{u} \in \mathcal{U}} \, \sum_{\tilde{w}} P(\tilde{w})\big[\, f(x,\tilde{u},\tilde{w}) + V(\phi(\alpha,F(x,\tilde{u},\tilde{w})) \,\big] \\
         & = \min_{\tilde u \in \mathcal{U}} \mathbb{E}_{\tilde w}\left[f(x, \tilde u, \tilde w) + V\left(\phi\left(\alpha, F(x, \tilde u, \tilde w)\right)\right)\right]
    \end{aligned}
       \label{eq_A17}
    \end{equation}
\end{enumerate}
This confirms the equality in~\eqref{eq_A7}.

Substituting $V$ with $V^*$ in~\eqref{eq_A7}, and using the fact that $T V^\star = V^\star$, we obtain
\begin{equation}
    \begin{aligned}
        & (TV^\star) \left(\phi(\alpha,x)\right) = \left(T\left(V^\star\left(\phi(\alpha,x)\right)\right)\right)(x) \\
        & \Rightarrow V^\star \left( \phi(\alpha,x) \right) = \left(T\left(V^\star\left(\phi(\alpha,x)\right)\right)\right)(x)
    \end{aligned}
    \label{eq_A19}
\end{equation}
Therefore, $V^\star \left( \phi(\alpha,x) \right)$ is also a fixed point of $T$.
By the minimality of $V^\star$ among nonnegative fixed points
\begin{equation}
    V^\star(x) \;\le\; V^\star \left( \phi(\alpha,x) \right), \quad \forall x \in \mathcal{X}
    \label{eq_A21}
\end{equation}
Applying the same argument for $\alpha^{-1}$ yields 
\begin{equation}
    V^\star(x) \;\le\; V^\star \left( \phi(\alpha^{-1},x) \right), \quad \forall x \in \mathcal{X}
    \label{eq_A22}
\end{equation}
Using the change of variable $x=\phi(\alpha,y)$,~\eqref{eq_A22} becomes
\begin{equation}
    V^\star\left(\phi(\alpha,y)\right) \;\le\; V^\star \left( \phi(\alpha^{-1},\phi(\alpha,y)) \right)
    \label{eq_A23}
\end{equation}
Now, we have
\begin{equation}
    \begin{aligned}
        &  \phi(\alpha^{-1},\phi(\alpha,y)) = \phi\left(\alpha^{-1}*\alpha,y\right) = \phi(e,y) =y
        \end{aligned}
        \label{eq_A24}
\end{equation}
where the first equality follows from~\eqref{eq_A0a}, the second from the group identity $e=\alpha*\alpha^{-1}$, and the third from~\eqref{eq_A0b}.
Substituting \eqref{eq_A24} into \eqref{eq_A23} yields
\begin{equation}
    V^\star(\phi(\alpha,y)) \leq V^\star(y),
    \qquad \forall y \in \mathcal{X}
    \label{eq_A25}
\end{equation}
Substituting $y$ with $x$ in~\eqref{eq_A25} and combining with~\eqref{eq_A21} yields the desired invariance~\eqref{eq_A1}.
\end{proof}
When the conditions of Theorem~\ref{The1} are satisfied and $G$ is a Lie group~\cite{c30}, then Cartan’s moving frame method can be used to construct symmetry–reduced coordinates. Details on how this framework enables dimension reduction are provided in~\cite[Section~2.3]{c21}. 
In the reduced coordinates, the Bellman equation can be solved in a lower–dimensional state space.

\subsection{Dimension Reduction}  \label{App_B}
We now show that the state vector $\begin{bmatrix}h & r\end{bmatrix}^T \in \mathbb{R}^4$ can be reduced to $\begin{bmatrix}e & d & \theta\end{bmatrix} \in \mathbb{R}_+ \times \mathbb{R}_+ \times [0,\pi]$.

Let $ G = \mathrm{SO}(2) $. Each group element $\alpha \in G$ is
parameterized by a rotation angle $\beta \in \mathbb{R}$, i.e.,
\begin{equation}
    \alpha \triangleq R(\beta),
    \qquad
    R(\beta) =
    \begin{bmatrix}
        \cos(\beta) & -\sin(\beta) \\
        \sin(\beta) & \cos(\beta)
    \end{bmatrix}
    \label{eq_B0}
\end{equation}
where $R:\mathbb{R} \rightarrow\mathbb{R}^{2\times2}$. Because the trigonometric functions are $2\pi$-periodic, the angles $\beta$ and $\beta + 2\pi k$ for any $k \in \mathbb{Z}$ represent the same group element.
The group operation is matrix multiplication of rotation matrices, the identity is $R_0=I$, and the inverse element is $\alpha^{-1}=R(-\beta)$.

We define the state as $x=\begin{bmatrix}h & r\end{bmatrix}^T \in \mathcal{X} = \mathbb{R}^4$. The group actions on the state, control, and noise are defined as
\begin{subequations}
    \begin{align}
        &\phi(\alpha,x)=\begin{bmatrix} t \\ t\end{bmatrix} + \begin{bmatrix}\alpha && 0 \\ 0 && \alpha \end{bmatrix}\left(x-\begin{bmatrix} t \\ t\end{bmatrix}\right)  \label{eq_B1a} \\
         &\chi(\alpha,u)=\alpha u \label{eq_B1b}\\
         & \psi(\alpha,w)=\alpha w \label{eq_B1c}
    \end{align}
    \label{eq_B1}
\end{subequations}
We write the system dynamics as
\begin{equation}
    F(x,u,w)=\begin{bmatrix}F_h(h,w) \\ F_r(r,u)\end{bmatrix}
    \label{eq_B2a}
\end{equation}
where $F_h$ and $F_r$ are defined in~\eqref{eq1}. Under the group actions in~\eqref{eq_B1}, the system dynamics in~\eqref{eq_B2a} and the cost in~\eqref{eq3} satisfy the conditions of Theorem~\ref{The1}. Hence, the value function is symmetric, allowing the state space to be reduced. Since the rotation matrix depends on a single parameter, i.e., $\dim G=1$, the reduced state space has dimension $n-\dim G=4-1=3$.

To construct symmetry–reduced coordinates, we apply \emph{Cartan’s moving frame method}~\cite{c21,c32}.
Assume that the group action $\phi(\alpha,x)$ decomposes as $\phi(\alpha,x)=(\phi^a(\alpha,x),\phi^b(\alpha,x))$ with $p$ and $n-p$ components, respectively, where $\phi^a(\alpha,x)$ is invertible.
For some $c$ in the range of $\phi^a(\alpha,x)$, we define a coordinate cross section $\mathcal{C}=\{x:\phi^a(\alpha,x)=c\}$.
Assume that for any $x\in \mathcal{X}$, there is a unique group element $\alpha^* \in G$ such that
\begin{equation}
    \phi(\alpha^*,x) \in \mathcal{C} 
    \label{eq_B2}
\end{equation}
This $\alpha^*$ is called the \emph{moving frame}~\cite{c32}. Using this moving frame, we define the reduced coordinate map $\rho:\mathcal{X} \rightarrow \mathbb{R}^{n-p}$ by
\begin{equation}
    \rho(x) \triangleq\phi^b(\alpha^*,x)
    \label{eq_B3}
\end{equation}
For all $\alpha \in G$ we have $\rho(\phi(\alpha,x))=\rho(x)$,~\cite[Lemma 1]{c32}, meaning that $\rho$ is \emph{invariant} under the group action. Furthermore, the restriction of $\rho$ to $\mathcal{C}$ is injective. For additional details on this method, we refer the reader to~\cite{c21,c32}. 

We now compute the explicit form of the moving frame.
We define the cross-section $\mathcal{C}$ as in~\eqref{eq5a}, and we let $\alpha^*=R(\beta^*)$. Applying~\eqref{eq_B1a} to $r$ and enforcing the cross-section condition~\eqref{eq_B2} yields
\begin{equation}
    t+R(\beta^*)\cdot(r-t) \in \mathcal{C}
    \label{eq_B5}
\end{equation}
Using~\eqref{eq5a}, this yields the conditions
\begin{subequations}
    \begin{align}
        & t^x+\cos(\beta^*)(r-t)^x-\sin(\beta^*)(r-t)^y-t^x>0  \label{eq_B6a} \\
         &t^y+\sin(\beta^*)(r-t)^x+\cos(\beta^*)(r-t)^y-t^y=0  \label{eq_B6b}
    \end{align}
    \label{eq_B6}
\end{subequations}
which leads to
\begin{equation}
        \beta^*(x) = -\operatorname{atan2}\big(-(r-t)^y,(r-t)^x \big)
        \label{eq_B7}
\end{equation}
Substituting $\alpha^*$ into~\eqref{eq_B1a} yields the invariant coordinates
\begin{subequations}
    \begin{align}
        & t^x+\cos(\beta^*)(r-t)^x-\sin(\beta^*)(r-t)^y  \triangleq \rho_1  \label{eq_B8a} \\
         &t^x+\cos(\beta^*)(h-t)^x-\sin(\beta^*)(h-t)^y \triangleq \rho_2  \label{eq_B8b}\\
         &t^y+\sin(\beta^*)(h-t)^x+\cos(\beta^*)(h-t)^y \triangleq \rho_3\label{eq_B8c}
    \end{align}
    \label{eq_B8}
\end{subequations}
Combining~\eqref{eq_B6b}, and~\eqref{eq_B8a} we can simplify the expression for $\rho_1$ in~\eqref{eq_B8a}, and obtain
\begin{equation}
    \rho_1 = t^x + ||r-t||_2
    \label{eq_B9a}
\end{equation}
This yields the following invariant coordinates
\begin{equation}
    \rho(x)=\begin{bmatrix}\rho_1\\ \rho_2\\ \rho_3\end{bmatrix}=\begin{bmatrix}t^x + ||r-t||_2\\ t^x+\cos(\beta^*)(h-t)^x-\sin(\beta^*)(h-t)^y\\ t^y+\sin(\beta^*)(h-t)^x+\cos(\beta^*)(h-t)^y\end{bmatrix}
    \label{eq_B9}
\end{equation}
Note that on the cross-section $\mathcal{C}$, we have $\beta^*=0$, $r^y=t^y$, and $r^x>t^x$, which simplifies the invariant map to
\begin{equation}
    \rho(x)=\begin{bmatrix}r^x \\ h^x\\ h^y\end{bmatrix}
    \label{eq_B10}
\end{equation}
which is an injective map.

We now define a transformation  $\kappa(\rho)$, where $\kappa:\mathbb{R}^3\rightarrow \mathbb{R}_+ \times \mathbb{R}_+ \times [0,\pi]$, such that
\begin{equation}
    \tilde{\rho}=\kappa(\rho)\triangleq\begin{bmatrix}e & d & \theta\end{bmatrix}^T
    \label{eq_B10b}
\end{equation}
Clearly, $e=\rho_1-t^x$. 
 Moreover, from~\eqref{eq_B8a} and~\eqref{eq_B6b}, we obtain
\begin{equation}
    \begin{bmatrix}\rho_1-t^x\\0\end{bmatrix}=\begin{bmatrix}\cos(\beta^*)&-\sin(\beta^*)\\ \sin(\beta^*) & \cos(\beta^*)\end{bmatrix}\begin{bmatrix}(r-t)^x\\(r-t)^y\end{bmatrix} = R(\beta^*)(r-t)
    \label{eq_B12}
\end{equation}
 which implies that
 \begin{equation}
     r-t=R(-\beta^*)\begin{bmatrix}\rho_1-t^x \\ 0\end{bmatrix}
     \label{eq_B13}
 \end{equation}
Similarly, from~\eqref{eq_B8b} and~\eqref{eq_B8c}, we obtain
\begin{equation}
    h-t=R(-\beta^*)\begin{bmatrix}\rho_2-t^x \\ \rho_3-t^y\end{bmatrix}
    \label{eq_B14}
\end{equation}
From Definition~\ref{def1} and using~\eqref{eq_B13} and~\eqref{eq_B14}, we have
\begin{equation}
    \begin{aligned}
       &  d  = ||h-r||_2=||(h-t)-(r-t)||_2= \\
        &= \left|\left|\left(R(-\beta^*)\begin{bmatrix}\rho_2-t^x \\ \rho_3-t^y\end{bmatrix}\right)-\left(R(-\beta^*)\begin{bmatrix}\rho_1-t^x \\ 0\end{bmatrix}\right)\right|\right|_2 \\
        & = \left|\left|\begin{bmatrix}\rho_2-\rho_1 \\ \rho_3-t^y\end{bmatrix}\right|\right|_2 = \sqrt{(\rho_2-\rho_1)^2 + (\rho_3-t^y)^2}
    \end{aligned}
    \label{eq_B15}
\end{equation}
Finally, we have
\begin{equation}
    \begin{aligned}
        & \theta= \arccos\left(\frac{(r-t)^T(h-r)}{||h-r||_2 \cdot||r-t||_2}\right)\\
        &= \arccos\left(\frac{\left(R(-\beta^*) \begin{bmatrix}\rho_1-t^x \\0\end{bmatrix} \right)^T\left(R(-\beta^*)\begin{bmatrix}\rho_2-\rho_1 \\ \rho_3-t^y\end{bmatrix}\right)}{(\rho_1-t^x)\sqrt{(\rho_2-\rho_1)^2 + (\rho_3-t^y)^2}}\right)\\
        &= \arccos\left(\frac{\rho_2-\rho_1}{\sqrt{(\rho_2-\rho_1)^2 + (\rho_3-t^y)^2}}\right)
    \end{aligned}
    \label{eq_B16}
\end{equation}
where the first equality follows from Definition~\ref{def1}, and the second equality follows from~\eqref{eq_B13} and~\eqref{eq_B14}.
Therefore we have
\begin{equation}
    \tilde{\rho}=\kappa(\rho) = \begin{bmatrix}\rho_1-t^x \\  \sqrt{(\rho_2-\rho_1)^2 + (\rho_3-t^y)^2}\\ \arccos\left(\frac{\rho_2-\rho_1}{\sqrt{(\rho_2-\rho_1)^2 + (\rho_3-t^y)^2}}\right)\end{bmatrix}
    \label{eq_B17}
\end{equation}
These quantities provide a complete coordinate system on the reduced state space.

\subsection{Dynamics of Reduced-States}  \label{App_C}

 Let $(h,r)$ be a specific point defined in~(\ref{eq7}).  From Fig.~\ref{Fig2}, and since from~(\ref{eq1}) we know that $r_{+}=F_r(r,u)$ and $h_{+}=F_h(h,w)$, we have
    \begin{equation}
        \begin{aligned}
            &e_{+}=||r_{+}-t||_2 = ||r + u - t ||_2 \\
            &= ||t + e\begin{bmatrix}1 & 0\end{bmatrix}^T +u -t||_2\\
            &=||e\begin{bmatrix}1 & 0\end{bmatrix}^T +u||_2 \\
            & \triangleq F_e(e,u)
        \end{aligned}
        \label{eq_C1} 
    \end{equation}
    where $F_e: \mathbb{R}_+ \times \mathcal{U} \rightarrow \mathbb{R}_+$. To define $\theta_{+}$, let $\nu=r_+-t$ and $\xi=h_+-r_+$, then
    \begin{equation}
        \theta_{+}=\arccos\left(\frac{\nu^T \xi}{||\nu||_2 ||\xi||_2}\right) \triangleq F_\theta(d,e,\theta,u,w)
        \label{eq_C2} 
    \end{equation}
    where $F_\theta:\mathbb{R}_+ \times \mathbb{R}_+ \times [0,\pi] \times \mathcal{U} \times \mathcal{W} \rightarrow [0,\pi]$, and where
    \begin{equation}
         \nu = \begin{bmatrix}e + u^x \\ u^y\end{bmatrix}, \quad
         \xi = \begin{bmatrix}d \cos(\theta) + w^x -u^x \\ d \sin(\theta) + w^y - u^y\end{bmatrix}
        \label{eq_C3} 
    \end{equation}
    This means that $\theta_{+}$ is a function of $d$, $e$, $\theta$, $u$, and $w$. Note that in the above equations, $u=\begin{bmatrix}u^x \\ u^y\end{bmatrix}$, and $w=\begin{bmatrix}w^x \\ w^y\end{bmatrix}$.
    Moreover, $d_+$ is defined as
    \begin{equation}
        \begin{aligned}
            &d_{+}= ||h_{+}-r_{+}||_2\\
            &=||d \begin{bmatrix}\cos(\theta) & \sin(\theta)\end{bmatrix}^T +w-u||_2 \\&\triangleq F_d(d,\theta,u,w)
        \end{aligned}
        \label{eq_C4} 
    \end{equation}
where $F_d:\mathbb{R}_+ \times [0,\pi] \times \mathcal{U} \times \mathcal{W} \rightarrow \mathbb{R}_+$. Hence, from~(\ref{eq_C1}),~(\ref{eq_C2}), and~(\ref{eq_C4}), for a specific pair $(h,r)$ defined as~(\ref{eq7}), $d_+$, $e_+$, and $\theta_+$, can be expressed as functions of $d$, $e$, $\theta$, $u$, and $w$.

\else 

In this appendix, we extend the symmetry–reduction framework of~\cite{c21}, originally developed for finite–horizon stochastic optimal control, to the SSP setting.

\fi

\section*{Acknowledgment}

This work was partially supported by the Wallenberg AI, Autonomous Systems and Software Program (WASP) funded by the Knut and Alice Wallenberg Foundation.



%

\end{document}